\newtheorem{theorem}{\textbf{Theorem}}
\newtheorem{prop}[theorem]{\textbf{Proposition}}
\newtheorem{definition}{\textbf{Definition}}
\newtheorem{proof}{\textbf{Proof}}
\newcommand{\Expt}{\mbox{${\mathbb E}$} }
\begin{document}
\date{}

\title{The Achievable Distortion Region of Bivariate Gaussian Source on Gaussian Broadcast Channel}

\author{Chao Tian,~\IEEEmembership{Member,~IEEE}, Suhas Diggavi,~\IEEEmembership{Member,~IEEE}, \\and Shlomo Shamai (Shitz),~\IEEEmembership{Fellow,~IEEE}
}

\maketitle

\begin{abstract}
We provide a complete characterization of the achievable distortion region for the problem of sending a bivariate Gaussian source over bandwidth-matched Gaussian broadcast channels, where each receiver is interested in only one component of the source. This setting naturally generalizes the simple single Gaussian source bandwidth-matched broadcast problem for which the uncoded scheme is known to be optimal. We show that a hybrid scheme can achieve the optimum for the bivariate case, but neither an uncoded scheme alone nor a separation-based scheme alone is sufficient.  We further show that in this joint source channel coding setting, the Gaussian setting is the worst scenario among the sources and channel noises with the same covariances. 
\end{abstract}
\begin{keywords}
Gaussian source, joint source-channel coding, squared error distortion measure.
\end{keywords}

\section{Introduction}
\label{sec:intro}
It is well known that Shannon's source-channel separation result for point-to-point communication \cite{Shannon:48} does not hold for in general for multi-terminal systems, and thus joint source-channel coding may be required to achieve the optimum. One simple yet intriguing scenario where source-channel separation is known to be suboptimal is broadcasting Gaussian sources on Gaussian channels. 

When a single Gaussian source is at the encoder, the achievable distortion region is known when the source bandwidth and the channel bandwidth are matched \cite{Goblick:65}, for which a simple analog scheme is optimal. However when the source and channel bandwidths are not matched, exact characterization of the achievable distortion region is not known. The best known coding schemes are based on joint source-channel using hybrid signaling \cite{ReznicFederZamir:06,Prabhakaran:05}, and approximate characterizations were given in \cite{TianDiggaviShamai:09}; see references therein for related works. As a simple extension to this problem of single Gaussian source broadcasting, when the source is a bandwidth-matched bivariate Gaussian and each decoder is interested in  one source component, only partial characterization is known when uncoded scheme is shown to be optimal under certain signal-to-noise ratio conditions \cite{Bross:08}. 

In this work, we provide a complete characterization of the achievable distortion region for broadcasting bivariate Gaussian sources over Gaussian channels when each receiver is interested in only one component, where the source bandwidth and the channel bandwidth are matched. We further show that in this joint source channel coding setting, the Gaussian problem is the worst scenario among the sources and channel noises with the same covariances, in the sense that any distortion pair that is achievable in the Gaussian setting is also achievable for other sources and channel noises. Our work is built on the outer bounds given in \cite{Bross:08} and we show that a hybrid coding scheme (different from the one given in \cite{Vishwanath:09} proposed for the same problem) can achieve the outer bounds. Our main contribution in this work is this new coding scheme and a detailed and systematic analysis of the inner and outer bounds, which result in a complete solution. To the best of our knowledge, this is the first case in the literature that a hybrid scheme is shown to be optimal for a joint source-channel problem, whereas neither an uncoded scheme alone nor a separation-based scheme alone is optimal. 

\section{Problem Definition}
\label{sec:definition}

Let $\{S_1(i),S_2(i)\}$ be a memoryless and stationary bivariate Gaussian source with zero mean and covariance matrix
\begin{align}
\label{eqn:covariance}
\bigg{(}
\begin{array}{cc}
\sigma^2 &\rho \sigma^2\\
\rho\sigma^2 & \sigma^2\\
\end{array}
\bigg{)}
\end{align}
where without loss of generality, we can assume $\rho\geq 0$. The
vector $(S_k(1),S_k(2),...,S_k(n))$ will be written as $S^n_k$ for $k=1,2$. We use $\mathbb{R}$ to denote the domain of reals. The Gaussian memoryless broadcast channel is given by the model 
\begin{align}
Y_k=X+Z_k,\quad k=1,2,
\end{align}
where $Y_k$ is the channel output observed by the $k$-th receiver, and $Z_k$ is the zero-mean independent additive Gaussian noise in the channel. Thus the channel is memoryless in the sense that $(Z_1(i),Z_2(i))$ is a memoryless and stationary process. The variance of $Z_k$ is denoted as $N_k$, and without loss of generality, we shall assume $N_1\leq N_2$. Throughout the paper, we use natural logarithm.

\begin{figure}[tb]
\begin{centering}
\includegraphics[width=12cm]{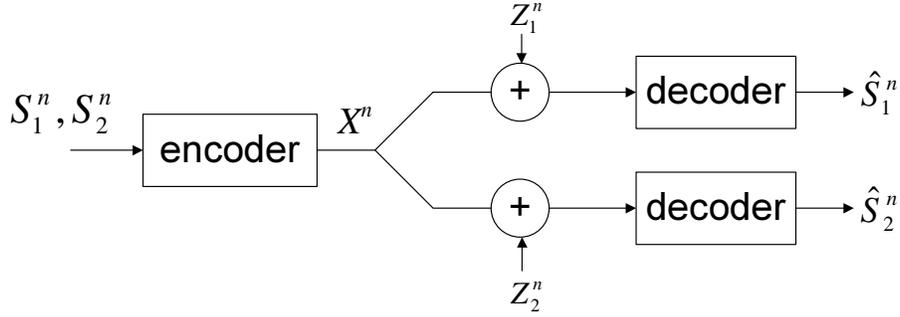}
\caption{Broadcasting bivariate Gaussian sources.\label{fig:systemdiag}}
\end{centering}
\end{figure}

The mean squared error distortion measure is used, which is given by  
$d(s_k^n,\hat{s}_k^n)=\frac{1}{n}\sum_{i=1}^n(s_k(i)-\hat{s}_k(i))^2$ for $k=1,2$. The encoder maps a source sample block $(S^n_1,S^n_2)$ into a channel input block $X^n$; the decoder observing channel output block $Y^n_k$ reconstruct the source $S^n_k$ within certain distortion; see Fig. \ref{fig:systemdiag}. The channel input $X$ is subject to an average power constraint. More formally, the problem is defined as follows.
\begin{definition}
An $(n,P,d_1,d_2)$ bivariate Gaussian source-channel broadcast code is given by an encoding function
\begin{align}
f:\mathbb{R}^n\times \mathbb{R}^n\rightarrow \mathbb{R}^n,
\end{align}
such that
\begin{align}
\frac{1}{n}\sum_{i=1}^n\Expt (X(i))^2\leq P,
\end{align}
and two decoding functions
\begin{align}
g_k:\mathbb{R}^n\rightarrow \mathbb{R}^n,\quad k =1,2,
\end{align}
and their induced distortions
\begin{align}
d_k=\Expt d(S^n_k,g_k(f(S_1^n,S_2^n)+Z^n_k)),\quad k=1,2.
\end{align}
where $\Expt(\cdot)$ is the expectation operation.
\end{definition}

In the definition, $+$ in the expression $f(S_1^n,S_2^n)+Z^n_k$ is understood as the length-$n$ vector addition. It is clear that the performance of any Gaussian joint source-channel code depends only on the marginal distribution of $(S_1^n,S_2^n,X^n,Y^n_k)$, but not the joint distribution $(S_1^n,S_2^n,X^n,Y^n_1,Y^n_2)$. Note that this implies that physical degradedness does not differ from statistical degradedness in terms of the system performance. Since the Gaussian broadcast channel is always statistically degraded, we can assume physical degradedness without loss of generality.

\begin{definition}
A distortion pair $(D_1,D_2)\in \mathbb{R}_+\times\mathbb{R}_+$ is achievable under power constraint $P$, if for any $\epsilon>0$ and sufficiently large $n$ there exists an $(n,P,d_1,d_2)$ bivariate Gaussian source-channel broadcast code such that
\begin{align}
D_i+\epsilon\geq d_i,\quad i=1,2.
\end{align}
\end{definition}

The collection of all the achievable distortion pairs under power constraint $P$ for a given bivariate source is denoted by $\mathcal{D}(P,\sigma^2,\rho,N_1,N_2)$, and this is the region we shall characterize in this work. In fact, we shall determine the following function which clearly provides a characterization of the achievable distortion region.
\begin{align}
D_2(P,\sigma^2,\rho,N_1,N_2,D_1)=\min_{(D_1,d_2)\in \mathcal{D}(P,\sigma^2,\rho,N_1,N_2)}d_2.
\end{align}
Note that $\mathcal{D}(P,\sigma^2,\rho,N_1,N_2)$ is a closed set, and thus the minimization above is meaningful. Since the minimum $D_1$ that is achievable is given by
\begin{align}
D^{\min}_1\triangleq \frac{N_1\sigma^2}{P+N_1},
\end{align}
when the second receiver is completely ignored, the function $D_2(P,\sigma^2,\rho,N_1,N_2,D_1)$ is thus only meaningfully defined on the domain $[D^{\min}_1,\infty]$.

When the source is not Gaussian but with the same covariance structure, and the channel noises are not Gaussian, but has the same variances, we shall denote the achievable distortion region as $\mathcal{D}^*(P,\sigma^2,\rho,N_1,N_2)$. We shall show that the Gaussian setting has the worst case property, and thus the Gaussian scheme is a ``robust'' scheme.

\section{Main Result}
\label{sec:mainresult}

The main result of this paper is summarized in the following theorem.

\begin{theorem}
\label{theorem:maintheorem}
If $D_1> D^{\max}_1$, where $D^{\max}_1 \triangleq\sigma^2\frac{(1-\rho^2)P+N_1}{P+N_1}$, then
\begin{align}
D_2(P,\sigma^2,\rho,N_1,N_2,D_1)=\sigma^2\frac{N_2}{P+N_2}.
\end{align}

If $P\leq \frac{2\rho N_1}{1-\rho}$, then in the range $D_1\in [D^{\min}_1,D^{\max}_1]$, 
\begin{align}
D_2(P,\sigma^2,\rho,N_1,N_2,D_1)=D_2^u(P,\sigma^2,\rho,N_1,N_2,D_1),
\end{align}
where
\begin{align}
&D_2^u(P,\sigma^2,\rho,N_1,N_2,D_1)\nonumber\\
&\qquad\triangleq\sigma^2\Bigg{[}\bigg{(}\sqrt{1-\frac{D_1(P+N_1)}{P\sigma^2}+\frac{N_1}{P}}-\sqrt{\frac{\rho^2}{1-\rho^2}(\frac{(P+N_1)D_1}{P\sigma^2}-\frac{N_1}{P})}\bigg{)}^2\frac{(1-\rho^2)P}{P+N_2}+\frac{N_2}{P+N_2}\Bigg{]}.
\end{align}
On the other hand if $P> \frac{2\rho N_1}{1-\rho}$, then 
\begin{align}
D_2(P,\sigma^2,\rho,N_1,N_2,D_1)=\Bigg{\{}\begin{array}{cc}
D_2^u(P,\sigma^2,\rho,N_1,N_2,D_1)& D_1\in [D^{\min}_1,D^-_1)\cup (D^+_1,D^{\max}_1]\\
D_2^h(P,\sigma^2,\rho,N_1,N_2,D_1)&D_1\in [D^-_1,D^+_1]
\end{array}
\end{align}
where
\begin{align}
D_2^h(P,\sigma^2,\rho,N_1,N_2,D_1)\triangleq\frac{\sigma^2}{P+N_2}\Bigg{[}\frac{N_1(1-\rho^2)\sigma^2}{D_1}+N_2-N_1\Bigg{]}
\end{align}
and 
\begin{align}
&D^-_1=\sigma^2\frac{(P+2N_1)(1-\rho^2)-\sqrt{(P^2-(P+2N_1)^2\rho^2)(1-\rho^2)}}{2(P+N_1)}\\
&D^+_1=\sigma^2\frac{(P+2N_1)(1-\rho^2)+\sqrt{(P^2-(P+2N_1)^2\rho^2)(1-\rho^2)}}{2(P+N_1)}.
\end{align}
\end{theorem}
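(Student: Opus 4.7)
The plan is to split the argument into a converse step, which leans on the outer bound of \cite{Bross:08}, and an achievability step that combines three ingredients---a trivial analog scheme for the high-$D_1$ regime, a linear (uncoded) scheme for the bulk of $[D_1^{\min}, D_1^{\max}]$, and a hybrid scheme that is active only on the middle interval $[D_1^-, D_1^+]$ when $P > 2\rho N_1/(1-\rho)$---and then to match the inner and outer bounds pointwise.

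For $D_1 > D_1^{\max}$, I would send $X = \sqrt{P/\sigma^2}\, S_2$ with no coding. Receiver 2 attains its single-user optimum $\sigma^2 N_2/(P+N_2)$, and receiver 1's linear MMSE of $S_1$ from $Y_1$ yields exactly $D_1^{\max}$; time-sharing with a $D_1^{\min}$-achieving scheme covers the whole range $D_1 > D_1^{\max}$ at the same $D_2$, and the single-user converse $D_2 \geq \sigma^2 N_2/(P+N_2)$ matches. On $[D_1^{\min}, D_1^{\max}]$ I would analyze the linear scheme $X = \alpha S_1 + \beta S_2$ subject to $\sigma^2(\alpha^2 + 2\alpha\beta\rho + \beta^2) = P$. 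The linear MMSE estimates at the two receivers give closed-form $(d_1,d_2)$; reparametrizing in terms of $a = \alpha + \beta\rho$ and $b = \beta + \alpha\rho$ reduces the power constraint to a single quadratic relating $a$ and $b$, and fixing $d_1 = D_1$ pins $a^2$ while $b$ is chosen to minimize $d_2$. This one-parameter optimization produces exactly the square-root-difference expression inside $D_2^u$, establishing $D_2 \leq D_2^u$ on this range.

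The central novelty is the hybrid scheme used on $[D_1^-, D_1^+]$. My plan is to superpose an analog component $X_a$, itself a linear combination of $S_1$ and $S_2$, with an independent digital component $X_d$ carrying a Gel'fand--Pinsker/Costa encoding of an auxiliary $U$ correlated with the sources, with the power budget splitting as $P = P_a + P_d$. Choosing $U$ roughly as $U = \lambda S_1 + \kappa X_a + W$ for an independent Gaussian $W$, with $\kappa$ tuned so that receiver 1 can dirty-paper decode $U$ against the known-to-the-encoder interference $X_a$, receiver 1 then forms a joint MMSE of $S_1$ from $(U, Y_1)$, while receiver 2 cannot decode $U$ and simply MMSE-estimates $S_2$ from $Y_2$, exploiting the correlation between $S_2$ and $X_a$. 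Optimizing over $(P_a, \lambda, \kappa, \mathrm{Var}(W))$ and writing the optimum in terms of the resulting $D_1$ should collapse into the linear-in-$1/D_1$ form of $D_2^h$, and the crossover points $D_1^\pm$ emerge as the roots of the quadratic $D_2^h(D_1) = D_2^u(D_1)$; the discriminant of that quadratic vanishes exactly at $P = 2\rho N_1/(1-\rho)$, which explains the bifurcation in the theorem.

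For the converse I would invoke the outer bound of \cite{Bross:08}, which supplies a lower bound on $D_2$ in terms of $D_1$, and verify that it coincides with the pointwise minimum of $D_2^u$ and $D_2^h$ in each regime. The main obstacle I anticipate is (i) identifying the precise hybrid construction---the auxiliary $U$, the power split, and the dirty-paper parameter---that makes the inner bound \emph{exactly} $D_2^h$ rather than an expression which is only close; and (ii) the algebraic bookkeeping needed to show that this inner bound, glued to the uncoded inner bound $D_2^u$, matches the outer bound everywhere. Once the inner and outer curves coincide on the overlap intervals, the thresholds $D_1^\pm$ and the regime condition $P \gtrless 2\rho N_1/(1-\rho)$ should fall out as the algebraic consistency condition of the two-piece characterization.
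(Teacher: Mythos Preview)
Your overall converse/achievability split and the treatment of the $D_1>D_1^{\max}$ regime match the paper. The uncoded analysis is also fine in spirit; the paper simply cites \cite{Bross:08} for that part.

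The substantive gap is in the hybrid scheme. You propose that the digital layer carry information about $S_1$ via a Costa/dirty-paper code aimed at receiver~1, with receiver~2 unable to decode it. But then $X_d$ is pure noise at receiver~2, and receiver~2's MMSE of $S_2$ from $Y_2$ is
\[
\sigma^2\Bigl(1-\tfrac{\sigma^2(\alpha'\rho+\beta')^2}{P+N_2}\Bigr),
\]
where $(\alpha'\rho+\beta')^2\le \alpha'^2+2\alpha'\beta'\rho+\beta'^2 = P_a/\sigma^2\le P/\sigma^2$ with equality only when $\alpha'=0$. So except at the trivial endpoint you cannot reach the curve $D_2^h$, which has $P+N_2$ in the denominator as if the full power were useful to receiver~2. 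The paper's scheme does the opposite of yours: the digital layer is a \emph{quantization of $S_2$}, namely $X_d=\tilde\gamma(S_2+U)$, sized so that the \emph{weak} receiver can decode it ($I(S_2;X_d)\le I(X_d;Y_2)$); then both receivers strip $X_d$ and use it together with the analog residual $\tilde\alpha S_1+\tilde\beta S_2+Z_k$. For receiver~2 this directly refines $S_2$; for receiver~1 it plays the role of the genie's side information $S_2$, which is exactly what the outer bound assumes. A further non-obvious point, stressed in the paper, is that the analog part must still contain $S_2$ (i.e.\ $\tilde\beta>0$) with the specific choice $\tilde\beta=\tfrac{N_1\rho}{\tilde\alpha(1-\rho^2)\sigma^2}$; without it the inner bound does not touch $D_2^h$.

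On the converse side, ``invoke the outer bound of \cite{Bross:08}'' is too vague. The paper uses the particular genie-aided specialization in which $S_2$ is handed to receiver~1; eliminating the auxiliary parameter there yields precisely $D_2\ge D_2^h(D_1)$. The interval $[D_1^-,D_1^+]$ in the paper is not obtained as the set where $D_2^u=D_2^h$, but rather as (i) the complement of the SNR range in Bross's Theorem where the uncoded scheme is certified optimal, and equivalently (ii) the range of $D_1$ for which the hybrid parameters $(\tilde\alpha,\tilde\beta)$ above satisfy the analog-power feasibility $\sigma^2(\tilde\alpha^2+\tilde\beta^2+2\rho\tilde\alpha\tilde\beta)\le P$. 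Your quadratic-intersection heuristic may give the same endpoints, but you should derive them one of these two ways to close the argument.
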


\vspace{0.5cm}
\noindent\textit{Remark:} Depending on the power constraint, the achievable distortion region may have two operating regimes. In the regime where $D_2(P,\sigma^2,\rho,N_1,N_2,D_1)=D^u_2(P,\sigma^2,\rho,N_1,N_2,D_1)$, the uncoded scheme given in \cite{Bross:08} is optimal, whereas in the regime $D_2(P,\sigma^2,\rho,N_1,N_2,D_1)=D^h_2(P,\sigma^2,\rho,N_1,N_2,D_1)$, a hybrid scheme given in the next section is optimal, but the uncoded scheme is not. Typical achievable distortion regions for these two cases will be illustrated in the next section after the hybrid coding scheme is given, where more observations regarding these schemes can be discussed.

Consider a source pair $(S^*_1,S^*_2)$ whose covariance is given by (\ref{eqn:covariance}), and channel noise pair $(Z^*_1,Z^*_2)$ whose variances are given by $(N_1,N_2)$. We have the following theorem.
\begin{theorem}\label{theorem:worstcase}
If $(D_1,D_2)\in \mathcal{D}(P,\sigma^2,\rho,N_1,N_2)$, then $(D_1,D_2)\in \mathcal{D}^*(P,\sigma^2,\rho,N_1,N_2)$. 
\end{theorem}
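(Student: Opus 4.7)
The plan is to take any Gaussian-optimal scheme that achieves $(D_1,D_2)\in\mathcal{D}(P,\sigma^2,\rho,N_1,N_2)$ and apply it essentially verbatim to the non-Gaussian source-channel pair $(S_1^*,S_2^*,Z_1^*,Z_2^*)$, then show that the induced distortions remain $(D_1,D_2)$. The conceptual reason this works is that both the uncoded scheme of \cite{Bross:08} and the hybrid scheme that will be introduced in the next section use, at the receivers, linear minimum mean-squared-error (LMMSE) estimators whose error depends on the joint source/channel distribution only through its second-order statistics; the matched second moments across the Gaussian and non-Gaussian models are therefore the crucial invariance.

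For the uncoded regime, the encoder output is an affine function $X^n=a_1 S_1^n+a_2 S_2^n$ whose coefficients are chosen from the Gaussian analysis to satisfy $\frac{1}{n}\sum_i\Expt X(i)^2\leq P$. Applying the same affine map to $(S_1^{*n},S_2^{*n})$ clearly respects the power constraint, and the joint second-order statistics of $(S_k^{*n},Y_k^{*n})$ coincide with those of $(S_k^n,Y_k^n)$. Using the Gaussian LMMSE coefficient at receiver $k$ therefore yields the identical MSE, and the genuine MMSE estimator that the non-Gaussian decoder is free to use can only do better. This handles every $D_1$ range in Theorem \ref{theorem:maintheorem} in which $D_2^u$ is optimal.

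For the hybrid regime I would decompose the scheme into three operations: (i) generation of an auxiliary Gaussian codeword $U^n$ correlated with $S_1^n$ through a Gaussian test channel of the form $U=\alpha S_1+V$ with $V$ independent Gaussian; (ii) transmission of a deterministic affine combination of $(S_1^n,S_2^n,U^n)$ over the broadcast channel; (iii) decoding of $U^n$ at each receiver followed by an LMMSE reconstruction of the intended source component from $(Y_k^n,U^n)$. For the non-Gaussian pair I would keep (i) and (ii) identical and replace the typicality-based channel decoder in (iii) by a mismatched nearest-neighbor decoder in the spirit of Lapidoth. Two ingredients then make the argument go through: the max-entropy property of the Gaussian distribution, which ensures that any rate supported over the Gaussian-noise channel is also supported over an additive-noise channel with the same variance; and the fact that the final LMMSE reconstruction depends only on the joint covariance of $(S_k^*,Y_k^*,U)$, which by construction matches the Gaussian case.

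The hard part will be rigorously handling the encoder-side step (i) in the non-Gaussian setting, where the standard Gelfand-Pinsker style binning argument traditionally requires joint typicality of $(S_1^n,U^n)$. Because $U=\alpha S_1+V$ with $V$ Gaussian and independent of everything else, the covering condition involves $S_1^*$ only through its variance, so I expect a second-moment concentration argument (Chebyshev in place of strong typicality) to close this gap. Once the binning is shown to succeed with high probability and the mismatched decoder of $U^n$ has vanishing error probability, the chain
\begin{align*}
\tfrac{1}{n}\Expt\|S_k^{*n}-\hat{S}_k^{*n}\|^2 \;\leq\; \text{(LMMSE error under }(S_k^*,Y_k^*,U))\;=\;\text{(Gaussian MSE)}\;\leq\; D_k+\epsilon
\end{align*}
closes the argument and establishes $(D_1,D_2)\in\mathcal{D}^*(P,\sigma^2,\rho,N_1,N_2)$.
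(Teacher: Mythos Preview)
Your high-level strategy matches the paper's: reuse the Gaussian-optimal hybrid scheme on $(S_1^*,S_2^*,Z_1^*,Z_2^*)$, note that the final LMMSE distortions depend only on second-order statistics, and invoke the Gaussian max-entropy property to certify that the digital codeword remains decodable. Two points deserve correction, one minor and one a genuine gap.

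The minor one: in the paper's hybrid scheme the digital codeword is built from $S_2$, not $S_1$---specifically $X_d=\tilde{\gamma}(S_2+U)$ with $U$ independent Gaussian---so your step~(i) has the wrong source. This matters because the decodability check is $I(S_2^*;X_d^*)\le I(X_d^*;Y_k^*)$, and the identity $h(X_d^*|S_2^*)=h(\tilde{\gamma}U)$ (an exactly Gaussian quantity) is what makes the comparison with the Gaussian case clean. There is also no Gelfand--Pinsker binning at the encoder; it is a plain covering step, and since $U$ is Gaussian regardless of the law of $S_2^*$, the covering/packing rates are compared directly through $h(X_d^*|Y_2^*)\le h(X_d-\Expt(X_d|Y_2))$ by max-entropy, with no Chebyshev workaround needed.

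The genuine gap is that you implicitly treat decoder~1 as at least as strong as decoder~2. In the Gaussian problem this is automatic because the channel is degraded, so checking decodability at receiver~2 suffices. For arbitrary $(Z_1^*,Z_2^*)$ with the prescribed variances the broadcast channel need not be degraded, and you must verify $I(S_2^*;X_d^*)\le I(X_d^*;Y_1^*)$ separately. The paper closes this by first bounding $I(S_2^*;X_d^*)-I(X_d^*;Y_1^*)\le I(S_2;X_d)-I(X_d;Y_1)$ via the same max-entropy step, and only then invoking Gaussian degradedness to get $I(S_2;X_d)-I(X_d;Y_1)\le I(S_2;X_d)-I(X_d;Y_2)=0$. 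Without this two-step detour your argument does not cover receiver~1.
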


This theorem essentially says the Gaussian setting has the worst sources and channels among those having the same covariance structure, a result similar to the well-known ones that the Gaussian source is the worst source \cite{CoverThomas} (Ex. 13.8) and the Gaussian channel is the worst channel \cite{CoverThomas} (Ex. 10.1).   

The proofs of Theorem \ref{theorem:maintheorem} and Theorem \ref{theorem:worstcase} are given in the next section. 

\section{Proof of the Main Result}
\label{sec:proof}

In this section, we shall first review some previous results which provide partial characterization of the achievable distortion region, then give a new hybrid coding scheme, which  provides the missing portion of the characterization. Finally, we provide a proof for the worst-case property of the Gaussian setting. 

\subsection{Preliminaries and Previous Results}
\label{subsec:previous}

It is straightforward to show that a simple analog scheme by sending $S_2$ directly (after certain scaling) achieves the distortion pair (see also \cite{Bross:08})
\begin{align}
D_1=\sigma^2\frac{(1-\rho^2)P+N_1}{P+N_1}= D^{\max}_1,\qquad \mbox{and} \qquad D_2=\sigma^2\frac{N_2}{P+N_2},
\end{align}
for which $D_2$ cannot be reduced even when the first receiver is not present. Thus we trivially have 
\begin{align}
D_2(P,\sigma^2,\rho,N_1,N_2,D_1)=\sigma^2\frac{N_2}{P+N_2},\qquad \mbox{if}\qquad D_1\geq \sigma^2\frac{(1-\rho^2)P+N_1}{P+N_1}.
\end{align}
Thus we only need to characterize the function $D_2(P,\sigma^2,\rho,N_1,N_2,D_1)$ when $D_1\in[D^{\min}_1,D^{\max}_1]$.

The uncoded scheme was investigated thoroughly in \cite{Bross:08}, which was shown to be optimal under certain conditions. More precisely, the uncoded scheme uses the single letter mapping
\begin{align}
X(i)=\sqrt{\frac{P}{\sigma^2(\alpha^2+2\alpha\beta\rho+\beta^2)}}\bigg{(}\alpha S_1(i)+\beta S_2(i)\bigg{)},
\end{align}
where $\alpha\in [0,1]$ and $\beta=1-\alpha$. The reconstruction in this uncoded scheme is thus naturally the single letter mapping given by $\Expt [S_k(i)|Y_k(i)]$ for $k=1,2$. The distortion pair is thus given by
\begin{align}
&\hat{D}_1(\alpha,\beta)=\sigma^2\left[\frac{P\beta^2(1-\rho^2)}{(P+N_1)(\alpha^2+2\alpha\beta\rho+\beta^2)}+\frac{N_1}{P+N_1}\right]\\
&\hat{D}_2(\alpha,\beta)=\sigma^2\left[\frac{P\alpha^2(1-\rho^2)}{(P+N_2)(\alpha^2+2\alpha\beta\rho+\beta^2)}+\frac{N_2}{P+N_2}\right].
\end{align}

The main result of \cite{Bross:08} is the following theorem. 

\begin{theorem}[Theorem 1, \cite{Bross:08}]
\label{theorem:Bross}
For any $(D_1,D_2)\in \mathcal{D}(P,\sigma^2,\rho,N_1,N_2)$, and
\begin{align}
\label{eqn:SNRbound}
\frac{P}{N_1}\leq \Gamma(D_1,\sigma^2,\rho),
\end{align}
there exist $\alpha^*,\beta^*\geq 0$ such that 
\begin{align}
\hat{D}_1(\alpha^*,\beta^*)\leq D_1,\qquad \mbox{and}\qquad \hat{D}_2(\alpha^*,\beta^*)\leq D_2,
\end{align}
where the threshold $\Gamma$ is given by
\begin{align}
\Gamma(D_1,\sigma^2,\rho)=\Bigg{\{}
\begin{array}{cc}
\frac{\sigma^4(1-\rho^2)-2D_1\sigma^2(1-\rho^2)+D^2_1}{D_1(\sigma^2(1-\rho^2)-D_1)}& 0<D_1<\sigma^2(1-\rho^2),\\
+\infty&\mbox{otherwise}.
\end{array}
\end{align}
\end{theorem}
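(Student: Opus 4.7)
\noindent\textit{Proof proposal.} The plan is to establish a converse using entropy-power-inequality (EPI) techniques adapted to the bivariate Gaussian broadcast setup, and then to verify that the resulting outer region coincides with the uncoded achievable region exactly when $P/N_1 \le \Gamma(D_1,\sigma^2,\rho)$.

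First I would exploit degradedness: since $N_1 \le N_2$, I may assume without loss of generality that $Y_2^n = Y_1^n + \tilde Z^n$ for a Gaussian $\tilde Z$ independent of $(X,Z_1)$, so that $(S_1^n,S_2^n) \to X^n \to Y_1^n \to Y_2^n$ forms a physical Markov chain. The observation noted earlier in the excerpt, that only the marginals $(S_k^n,Y_k^n)$ influence the distortion, makes this reduction cost-free.

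Second I would extract a single-letter outer bound. Shannon's rate--distortion converse gives $\frac{1}{n}I(S_1^n;Y_1^n) \ge \frac{1}{2}\log(\sigma^2/D_1)$. Expanding as $h(Y_1^n) - h(Y_1^n\mid S_1^n)$, bounding $h(Y_1^n) \le \frac{n}{2}\log(2\pi e(P+N_1))$ via the power constraint and Gaussian maximum entropy, and applying the EPI to $Y_1^n = X^n + Z_1^n$ conditioned on $S_1^n$, translates the $D_1$ constraint into a lower bound on the portion of $X$ uncorrelated with $S_1$. A symmetric analysis at receiver~$2$, using the source covariance together with EPI on the degraded $Y_2^n$, yields a matching lower bound on $D_2$. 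Parameterizing the two conditional-variance degrees of freedom by the projection coefficients $(\alpha,\beta)$ of $X$ onto $\mathrm{span}(S_1,S_2)$ recovers precisely the expressions $\hat D_1(\alpha,\beta)$ and $\hat D_2(\alpha,\beta)$ displayed before the theorem.

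The main obstacle is showing that the saddle point of the resulting outer-bound optimization lies in the nonnegative orthant $\{\alpha,\beta\ge 0\}$. Via the KKT conditions of the constrained quadratic problem, this reduces to a quadratic inequality in $P/N_1$; the boundary at which $\alpha^*$ (or $\beta^*$) changes sign is precisely the threshold $\Gamma(D_1,\sigma^2,\rho)$ in the statement. Inside this region the outer bound is tight and attained by the uncoded linear scheme of \cite{Bross:08}, while above it the minimizer would require a negative coefficient and so uncoded transmission ceases to be optimal---motivating the hybrid construction that is developed in the sequel of the present paper.
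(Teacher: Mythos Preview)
The paper does not supply its own proof of this theorem: it is quoted verbatim as Theorem~1 of \cite{Bross:08} and used as a black box. Consequently there is no in-paper argument to compare your proposal against; the only thing the present paper does with this result is restate it (Proposition~5) in the equivalent form obtained by solving the quadratic inequality $P/N_1\le\Gamma(D_1,\sigma^2,\rho)$ for $D_1$.

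On the substance of your sketch: the overall architecture---degraded reduction, an EPI-based outer bound at the two receivers, and then matching against the uncoded curves---is indeed the route taken in \cite{Bross:08}. However, two of your steps are asserted rather than argued and are precisely where the work lies. First, the claim that ``parameterizing the two conditional-variance degrees of freedom by the projection coefficients $(\alpha,\beta)$ recovers precisely $\hat D_1(\alpha,\beta)$ and $\hat D_2(\alpha,\beta)$'' is the heart of the matter: the outer bound in \cite{Bross:08} is not obtained by simply projecting $X$ onto $\mathrm{span}(S_1,S_2)$, but involves an auxiliary parameter (morally, the conditional entropy power $\frac{1}{n}\exp(2h(Y_1^n|S_2^n)/n)$ or an equivalent quantity) whose feasible range must be pinned down and then optimized. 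That the optimized outer boundary coincides with the uncoded curve is a nontrivial calculation, not a reparametrization. Second, your explanation of the threshold---``the boundary at which $\alpha^*$ (or $\beta^*$) changes sign''---is not quite the mechanism: $\alpha,\beta\ge 0$ are constrained to be nonnegative in the uncoded scheme by construction, and the threshold instead marks where the EPI-based outer bound, optimized over its own free parameter, ceases to be tangent to the uncoded inner boundary. Above the threshold the outer bound is still valid but is no longer met by any $(\alpha,\beta)$; this is why the present paper needs the separate genie-aided bound (Theorem~\ref{theorem:genie}) and the hybrid scheme in that regime.
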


This result partially characterizes the achievable distortion region $\mathcal{D}(P,\sigma^2,\rho,N_1,N_2)$, or equivalently the function $D_2(P,\sigma^2,\rho,N_1,N_2,D_1)$. Later we shall translate it into a more convenient form, such that by combining all the portions, the function $D_2(P,\sigma^2,\rho,N_1,N_2,D_1)$ can be characterized completely.

Also inherent in the proof of \cite{Bross:08} is a genie-aided outer bound, where the source $S_2$ is given directly to the decoder observing $Y_1$; see Fig. \ref{fig:genie}. This outer bound, which is a special case of the more general outer bound given in \cite{Bross:08}, was rederived in a simpler manner in \cite{Vishwanath:09}; the general outer bound in \cite{Bross:08} is more involved, and it usually requires optimization over several parameters. Let us follow the notation in \cite{Vishwanath:09} and denote the achievable distortion $(D_1,D_2)$ in the genie-aided setting as $(D_{1|2},D_2)$, and the simplified characterization of the achievable distortion region in \cite{Vishwanath:09} is as follows.
\begin{theorem}\label{theorem:genie}
The achievable $(D_{1|2},D_2)$ when $S_2$ is present at the first receiver is all the pairs 
\begin{align}
D_{1|2}\geq \frac{\sigma^2(1-\rho^2)}{1+\frac{\alpha P}{N_1}},\qquad \mbox{and}\qquad  D_2\geq \frac{\sigma^2}{1+\frac{(1-\alpha) P}{\alpha P+N_2}},
\end{align}
for some $\alpha\in [0,1]$.
\end{theorem}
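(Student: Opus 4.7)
The plan is to reduce the genie-aided problem to broadcasting two \emph{independent} Gaussian sources over the degraded Gaussian broadcast channel, after which both directions follow from classical separation-based results. Specifically, I would use the orthogonal decomposition $S_1 = \rho S_2 + W$ with $W \sim \mathcal{N}(0,\sigma^2(1-\rho^2))$ independent of $S_2$. Since decoder~1 has $S_2^n$ available through the genie, any reconstruction $\hat{S}_1^n$ can without loss of generality be written as $\rho S_2^n + \hat{W}^n$ with $\hat{W}^n$ a function of $(Y_1^n, S_2^n)$, and therefore $\mathbb{E}\|S_1^n - \hat{S}_1^n\|^2 = \mathbb{E}\|W^n - \hat{W}^n\|^2$. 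The genie-aided setup is thus equivalent to broadcasting the independent streams $W^n$ (to decoder~1, at target distortion $D_{1|2}$) and $S_2^n$ (to decoder~2, at target distortion $D_2$) over the same physical channel.

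For achievability I would invoke separation: encode $W^n$ with a point-to-point Gaussian rate-distortion code at rate $R_1 = \tfrac{1}{2}\log(1+\alpha P/N_1)$, encode $S_2^n$ at rate $R_2 = \tfrac{1}{2}\log(1+(1-\alpha)P/(\alpha P + N_2))$, and transmit the two digital streams via Gaussian superposition coding with power split $(\alpha P,(1-\alpha)P)$. Because $N_1 \leq N_2$, decoder~1 (the strong user) reliably peels off both layers, decodes the $W$ codeword, and combines it with its side information $S_2^n$ to form $\hat{S}_1^n$; decoder~2 treats the fine layer as noise and recovers $S_2^n$. The resulting distortion pair is precisely the claimed $(D_{1|2}, D_2)$.

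For the converse I would apply Fano's inequality together with the point-to-point Gaussian rate-distortion lower bound to each equivalent source: since $W \perp S_2$, one obtains $I(W^n; Y_1^n \mid S_2^n) \geq \tfrac{n}{2}\log\bigl(\sigma^2(1-\rho^2)/D_{1|2}\bigr)$ and $I(S_2^n; Y_2^n) \geq \tfrac{n}{2}\log(\sigma^2/D_2)$. Bergmans' converse for the degraded Gaussian broadcast channel, proved via the conditional entropy power inequality along the Markov chain $X^n \to Y_1^n \to Y_2^n$, shows that these two mutual informations must jointly lie in the region obtained by parameterizing over $\alpha \in [0,1]$. Inverting the rate bounds yields the claimed distortion bounds.

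The main subtlety I anticipate lies in correctly handling the genie's side information in the converse: one must justify that the relevant rate-distortion function for decoder~1 is that of the residual $W$ (not $S_1$), and that conditioning on $S_2^n$ in $I(W^n; Y_1^n \mid S_2^n)$ is compatible with the Bergmans EPI argument. The independence of $S_2$ from both $W$ and the channel noises is what makes these steps go through cleanly; beyond that, the argument reduces to the classical independent-sources-on-degraded-BC scenario in which separation is optimal.
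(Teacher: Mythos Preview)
The paper does not prove Theorem~\ref{theorem:genie}; it is quoted as a known result from \cite{Bross:08} and, in the simplified form stated, from \cite{Vishwanath:09}. Your proposal is correct and is precisely the standard argument underlying those references: the decomposition $S_1=\rho S_2+W$ with $W\perp S_2$ reduces the genie-aided problem to broadcasting two independent Gaussian sources over a degraded Gaussian broadcast channel, for which separation is optimal (achievability via superposition coding, converse via the rate--distortion lower bound plus Bergmans' conditional EPI). The subtlety you flag---that the EPI must be applied to $Y_2^n=Y_1^n+\tilde{Z}^n$ conditionally on $S_2^n$---is real but benign, since the incremental noise is independent of $S_2^n$; parameterizing $h(Y_1^n\mid S_2^n)=\tfrac{n}{2}\log 2\pi e(\alpha P+N_1)$ then yields both rate constraints with the same $\alpha\in[0,1]$.
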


As we shall show in the next subsection, this genie-aided outer bound is in fact tight for a certain regime. Intuitively speaking, since the first receiver is stronger than the second receiver, and $S_2$ is not required at the first receiver, the information $S_2$ is redundant at the first receiver in a certain sense, and thus we can expect the genie-aided outer bound to be reasonably good.

\begin{figure}[tb]
\begin{centering}
\includegraphics[width=12cm]{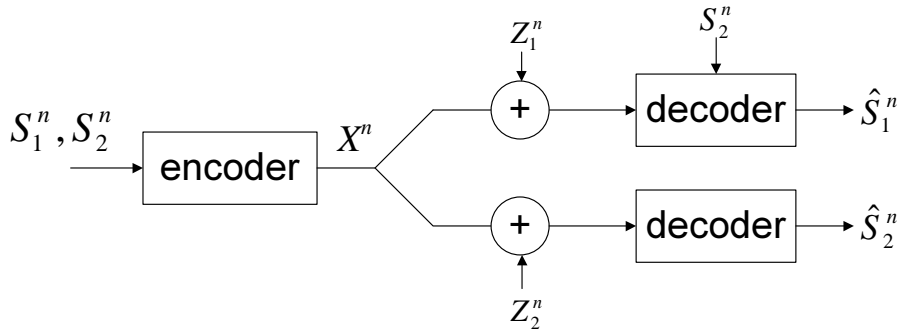}
\caption{Broadcasting bivariate Gaussian sources with a genie helper.\label{fig:genie}}
\end{centering}
\end{figure}

\subsection{A Hybrid Coding Scheme}

The coding scheme we propose is a hybrid one, where the channel input is given by 
\begin{align}
X^n=\tilde{\alpha} S^n_1+\tilde{\beta} S^n_2+ X_d^n,
\end{align}
where $X_d^n$ is (roughly) the quantized version of source sequence $S^n_2$ after some proper scaling;  $X^n_d$ is the digital portion of the channel input, and $\tilde{\alpha},\tilde{\beta}\geq 0$ are two scaling parameters to be specified later. 

More precisely, consider the single-letter distribution
\begin{align}
X_d=\tilde{\gamma}(S_2+U),
\end{align}
where $U$ is a zero mean Gaussian random variable independent of everything else with variance $Q$, and $\tilde{\gamma}\geq 0$ is again a scaling parameter to be specified later. 
Since $X_d\leftrightarrow S_2\leftrightarrow (S_1,S_2)$ is a Markov string, the joint distribution of $(S_1,S_2,X_d,X,Y_1,Y_2)$ is uniquely determined, and they are jointly Gaussian. We also need to define the coefficients $a_k$ and $b_k$ 
\begin{align}
\Expt[S_k|X_d,\tilde{\alpha} S_1+\tilde{\beta} S_2+Z_k]=a_kX_d+b_k(\tilde{\alpha} S_1+\tilde{\beta} S_2+Z_k), \qquad k=1,2.
\end{align}

This proposed hybrid scheme in this work is somewhat similar to the scheme given in \cite{Lapidoth:09} for joint source channel coding on the multiple access channel. In what follows, we only outline the coding scheme and some important analysis steps, but omit the rather technical detailed proof (a rigorous proof can be straightforwardly adapted from that given in \cite{Lapidoth:09}). 
\begin{itemize}
\item Codebook generation: generate $\exp(nR)$ codewords single-letter wise according to the marginal distribution of $X_d$; this codebook is revealed to both the encoder and decoders. 
\item Encoding: find a sequence $X_d^n$ in the codebook that is jointly typical with the source sequence $S^n_2$; if successful, the transmitter sends $X^n=\tilde{\alpha} S^n_1+\tilde{\beta} S^n_2+ X_d^n$.
\item Digital decoding: the $k$-th decoder tries to find a unique $X^n_d$ codeword in the codebook that is jointly typical with $Y^n_k$; the decoder also recovers the sequence $\tilde{\alpha} S^n_1+\tilde{\beta} S^n_2+Z^n_k$ after removing the digital codeword.
\item Estimation: if the digital decoding succeeds, then the decoder reconstructs the respective source sequence as $\hat{S}_k(i)=a_kX_d(i)+b_k(\tilde{\alpha} S_1(i)+\tilde{\beta} S_2(i)+Z_k(i))$.
\end{itemize}

An error occurs in the above scheme if the encoder fails to find a codeword that is jointly typical with $S^n_2$, or one of the decoders fails to find the correct digital codeword. Note that due to the Markov string $X_d\leftrightarrow S_2\leftrightarrow (S_1,S_2)$, we indeed have that the chosen $X_d^n$ is jointly typical with $(S^n_1,S^n_2,X^n,Y^n_1,Y^n_2)$ with high probability in the above scheme. 
Because the second receiver is a degraded version of the first receiver, the error probability can be made arbitrarily small if the following condition holds (after ignoring the $\delta$'s often seen in the typicality argument)
\begin{align}
\label{eqn:rateconstraint}
I(S_2;X_d)\leq R \leq I(X_d;Y_2).
\end{align}
Furthermore, to ensure the power constraint is not violated, we need 
\begin{align}
\label{eqn:powerconstraint}
\sigma^2(\tilde{\alpha}^2+\tilde{\beta}^2+2\rho\tilde{\alpha}\tilde{\beta})+\tilde{\gamma}^2(\sigma^2+Q)+2\tilde{\gamma}\sigma^2(\tilde{\alpha}\rho+\tilde{\beta})\leq P.
\end{align}

It is evident that as long as
\begin{align}
\label{eqn:reducedpowercondition}
\sigma^2(\tilde{\alpha}^2+\tilde{\beta}^2+2\rho\tilde{\alpha}\tilde{\beta})\leq P,
\end{align}
we can find $\tilde{\gamma}=\tilde{\gamma}^*$ such that (\ref{eqn:powerconstraint}) holds with equality, because the left hand side of (\ref{eqn:powerconstraint}) is monotonically increasing in $\tilde{\gamma}$ in the range $[0,\infty]$, and we shall choose precisely this value in the scheme. 

With this choice of $\tilde{\gamma}$, (\ref{eqn:rateconstraint}) can be simplified. Note that
\begin{align}
I(X_d;Y_2)&=h(Y_2)-h(Y_2|X_d)\nonumber\\
&=h(Y_2)-h(\tilde{\alpha} S_1+\tilde{\beta} S_2+ X_d+Z_2|X_d)\\
&=h(Y_2)-h(\tilde{\alpha} S_1+\tilde{\beta} S_2+Z_2|X_d)\\
&=\frac{1}{2}\log 2\pi e(P+N_2)-h(\tilde{\alpha} S_1+\tilde{\beta} S_2+Z_2|X_d)\\
&=\frac{1}{2}\log 2\pi e(P+N_2)-h(\tilde{\alpha} S_1+\tilde{\beta} S_2+Z_2|S_2+U).
\label{eqn:leftside}
\end{align}
Moreover, we have also
\begin{align}
I(S_2;X_d)=I(S_2;S_2+U)=\frac{1}{2}\log\frac{\sigma^2+Q}{Q}.\label{eqn:rightside}
\end{align}
Putting (\ref{eqn:leftside}) and (\ref{eqn:rightside}) together, (\ref{eqn:rateconstraint}) finally reduces to the following expression through some algebraic calculation
\begin{align}
Q\geq \sigma^2\frac{(1-\rho^2)\tilde{\alpha}^2\sigma^2+N_2}{P-(\tilde{\alpha}^2+\tilde{\beta}^2+2\tilde{\alpha}\tilde{\beta}\rho)\sigma^2}\triangleq Q^*.
\end{align}
We shall thus choose $Q=Q^*$ in the scheme (again ignoring some asymptotically small $\delta$ terms). 

By the joint typicality of the sequences when digital decoding succeeds, the distortions can be computed, which is summarized in the following theorem. 

\begin{theorem}
For $\tilde{\alpha},\tilde{\beta}\geq 0$ such that
\begin{align}
\sigma^2(\tilde{\alpha}^2+\tilde{\beta}^2+2\rho\tilde{\alpha}\tilde{\beta})\leq P,
\end{align}
the proposed hybrid scheme can achieve the distortion pair $(D^h_1,D^h_2)$
\begin{align}
&\tilde{D}^{h}_1=\sigma^2\frac{(1-\rho^2)[N_1P-(\tilde{\beta}^2+2\tilde{\alpha}\tilde{\beta}\rho)N_1\sigma^2+\tilde{\beta}^2N_2\sigma^2+(1-\rho^2)\tilde{\alpha}^2\tilde{\beta}^2\sigma^4]+N_1N_2}{(1-\rho^2)\tilde{\alpha}^2(P+N_1)\sigma^2+PN_1+N_1N_2+(\tilde{\alpha}^2+\tilde{\beta}^2+2\tilde{\alpha}\tilde{\beta}\rho)(N_2-N_1)\sigma^2},\nonumber\\
&\tilde{D}^h_2=\sigma^2\frac{\tilde{\alpha}^2(1-\rho^2)\sigma^2+N_2}{P+N_2}.\label{eqn:innerbound}
\end{align}
\end{theorem}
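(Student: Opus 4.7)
The plan is to reduce the distortion computation to a straightforward linear MMSE evaluation on a jointly Gaussian vector, after a standard typicality argument that makes the digital decoding reliable. First I would invoke the codebook generation, encoding, and decoding steps already outlined, together with the rate condition (\ref{eqn:rateconstraint}) reduced to $Q\geq Q^*$, to conclude that for any fixed $\delta>0$ and all sufficiently large $n$ the probability of encoder failure or of either receiver mis-decoding the digital codeword is at most $\delta$. Conditioned on the success event, each receiver recovers $X_d^n$ exactly and, by subtracting it from $Y_k^n$, also recovers the ``analog residual'' $W_k^n\triangleq\tilde{\alpha}S_1^n+\tilde{\beta}S_2^n+Z_k^n$ sample-wise. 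Since $(S_k,X_d,W_k)$ are jointly Gaussian with zero mean, the MMSE estimator of $S_k$ from $(X_d,W_k)$ is linear and coincides with the single-letter rule $\hat{S}_k=a_kX_d+b_kW_k$ defined in the scheme. The rare failure event contributes at most $\sigma^2\delta$ to the expected squared error and is absorbed into the $\epsilon$ of the achievability definition.

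Having reduced the problem to computing a single-letter MMSE, I would write the per-symbol distortion as
\begin{align}
\tilde{D}^h_k \;=\; \sigma^2 \;-\; \mathbf{c}_k^{\mathsf{T}}\,\Sigma_k^{-1}\,\mathbf{c}_k,
\end{align}
where $\mathbf{c}_k=(\mathrm{Cov}(S_k,X_d),\mathrm{Cov}(S_k,W_k))^{\mathsf{T}}$ and $\Sigma_k$ is the $2\times 2$ covariance matrix of $(X_d,W_k)$. The entries follow directly from $X_d=\tilde{\gamma}(S_2+U)$ and $W_k=\tilde{\alpha}S_1+\tilde{\beta}S_2+Z_k$: namely $\mathrm{Var}(X_d)=\tilde{\gamma}^2(\sigma^2+Q)$, $\mathrm{Cov}(X_d,W_k)=\tilde{\gamma}(\tilde{\alpha}\rho+\tilde{\beta})\sigma^2$, $\mathrm{Var}(W_k)=(\tilde{\alpha}^2+\tilde{\beta}^2+2\rho\tilde{\alpha}\tilde{\beta})\sigma^2+N_k$, $\mathrm{Cov}(S_1,X_d)=\tilde{\gamma}\rho\sigma^2$, $\mathrm{Cov}(S_2,X_d)=\tilde{\gamma}\sigma^2$, $\mathrm{Cov}(S_1,W_k)=(\tilde{\alpha}+\tilde{\beta}\rho)\sigma^2$, and $\mathrm{Cov}(S_2,W_k)=(\tilde{\alpha}\rho+\tilde{\beta})\sigma^2$. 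I would then substitute $Q=Q^*$ and the power-binding value of $\tilde{\gamma}$, noting that $\tilde{\gamma}$ cancels from the ratio $\mathbf{c}_k^{\mathsf{T}}\Sigma_k^{-1}\mathbf{c}_k$ because scaling $X_d$ does not change the information it conveys.

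For $k=2$ I expect a clean collapse: $X_d$ is itself a noisy observation of $S_2$, so the pair $(X_d,W_2)$ is equivalent to $(S_2+U,\tilde{\alpha}S_1+Z_2)$ for estimation of $S_2$, and a short computation after substituting $Q^*$ gives $\sigma^2(\tilde{\alpha}^2(1-\rho^2)\sigma^2+N_2)/(P+N_2)$, matching $\tilde{D}^h_2$. For $k=1$ the computation is genuinely two-dimensional because $X_d$ informs $S_1$ only through the correlation $\rho$; expanding $\det\Sigma_1$ and the quadratic form $\mathbf{c}_1^{\mathsf{T}}\Sigma_1^{-1}\mathbf{c}_1$ and then plugging in $Q^*$ produces the stated expression for $\tilde{D}^h_1$ after grouping terms.

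The main obstacle is this last algebraic simplification: for $\tilde{D}^h_1$, substituting $Q=Q^*$ mixes powers of $\tilde{\alpha}^2$, $\tilde{\beta}^2$, and $\tilde{\alpha}\tilde{\beta}$ in a non-obvious way, and it takes careful bookkeeping---for instance, factoring $(1-\rho^2)$ systematically and recognizing the denominator $(1-\rho^2)\tilde{\alpha}^2(P+N_1)\sigma^2+PN_1+N_1N_2+(\tilde{\alpha}^2+\tilde{\beta}^2+2\tilde{\alpha}\tilde{\beta}\rho)(N_2-N_1)\sigma^2$---to arrive at the closed form shown. No new idea beyond standard Gaussian-MMSE algebra is needed; the typicality argument itself follows \cite{Lapidoth:09} essentially verbatim, which is why the statement is presented here with only the single-letter distortion computation as the substantive content.
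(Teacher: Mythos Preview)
Your proposal is correct and follows exactly the approach the paper intends: the paper itself gives no proof beyond the scheme description and the sentence ``By the joint typicality of the sequences when digital decoding succeeds, the distortions can be computed,'' and you are simply filling in the single-letter Gaussian MMSE computation that this sentence points to. One small quibble: your aside that $(X_d,W_2)$ is ``equivalent to $(S_2+U,\tilde{\alpha}S_1+Z_2)$'' for estimating $S_2$ is not literally a sufficiency statement, but since you carry out the full $2\times 2$ MMSE calculation anyway and the substitution $Q=Q^*$ does collapse it to the stated $\tilde D^h_2$, this is harmless.
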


It is not clear a priori why in the analog part, the source $S_2$ needs to be included. However, this is rather critical, as we shall discuss shortly in the next section. 

\subsection{Matching the Inner and Outer Bounds}

The proof of the main result is now organized into three propositions. We start by rewriting Theorem \ref{theorem:Bross} in the following form. 
\begin{prop}
We have 
\begin{itemize}
\item If $P\leq \frac{2\rho N_1}{1-\rho}$, then the uncoded scheme is optimal in the range $D_1\in [D^{\min}_1,D^{\max}_1]$ and $D_2(P,\sigma^2,\rho,N_1,N_2,D_1)=D^u_2(P,\sigma^2,\rho,N_1,N_2,D_1)$;
\item If $P> \frac{2\rho N_1}{1-\rho}$, then the uncoded scheme is optimal over the range $D_1\in [D^{\min}_1,D^-_1]\cup [D^+_1,D^{\max}_1]$, and in this range $D_2(P,\sigma^2,\rho,N_1,N_2,D_1)=D^u_2(P,\sigma^2,\rho,N_1,N_2,D_1)$.
\end{itemize}
\end{prop}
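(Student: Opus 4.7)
The plan is to read the proposition as a repackaging of Theorem \ref{theorem:Bross}: whenever the SNR condition $P/N_1\le \Gamma(D_1,\sigma^2,\rho)$ is in force, any achievable $(D_1,D_2)$ is already dominated componentwise by some uncoded pair $(\hat D_1(\alpha^*,\beta^*),\hat D_2(\alpha^*,\beta^*))$, so under that condition the minimum achievable $D_2$ equals $\min_{\alpha,\beta\ge 0}\hat D_2(\alpha,\beta)$ subject to $\hat D_1(\alpha,\beta)\le D_1$. The proof therefore splits into two essentially independent computations: (i) translating the SNR condition into explicit intervals of $D_1$, producing the thresholds $D_1^-,D_1^+$ and the dichotomy on $P$; and (ii) carrying out the constrained minimization of $\hat D_2$ in closed form and matching it to the expression $D_2^u$.

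For step (i), I would first note that $\Gamma=+\infty$ on $D_1\ge \sigma^2(1-\rho^2)$, so the condition holds trivially there and we may restrict to $D_1<\sigma^2(1-\rho^2)$. On this range, multiplying $P/N_1\le \Gamma(D_1,\sigma^2,\rho)$ through by the positive denominator of $\Gamma$ and collecting terms yields a quadratic inequality in $D_1$ with positive leading coefficient $P+N_1$. A direct computation identifies its discriminant as $\sigma^4(1-\rho^2)[P^2-(P+2N_1)^2\rho^2]$, which is non-positive exactly when $P\le 2\rho N_1/(1-\rho)$. In that subcase the quadratic is non-negative throughout, so the SNR condition holds for every $D_1\in [D_1^{\min},D_1^{\max}]$. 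In the complementary subcase the two real roots of the quadratic turn out, after simplification, to be precisely $D_1^-$ and $D_1^+$ as defined in the theorem, and the SNR condition holds on the complement of the open interval $(D_1^-,D_1^+)$; elementary monotonicity checks then show $D_1^{\min}\le D_1^-\le \sigma^2(1-\rho^2)\le D_1^+\le D_1^{\max}$, giving the stated split.

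For step (ii), I would reparametrize the uncoded distortion pair by $u=\beta^2/(\alpha^2+2\rho\alpha\beta+\beta^2)$ and $v=\alpha^2/(\alpha^2+2\rho\alpha\beta+\beta^2)$. Since $\alpha,\beta\ge 0$, these quantities are linked by the identity $u+v+2\rho\sqrt{uv}=1$, which I would rearrange into $\sqrt{v}=\sqrt{1-(1-\rho^2)u}-\rho\sqrt{u}$, a strictly decreasing function of $u\in[0,1]$. Because $\hat D_1$ is affine and increasing in $u$ while $\hat D_2$ is affine and increasing in $v$, the constraint $\hat D_1\le D_1$ binds at the optimum and fixes
$$u^*=\frac{1}{1-\rho^2}\left(\frac{D_1(P+N_1)}{P\sigma^2}-\frac{N_1}{P}\right).$$
The feasibility requirement $u^*\in[0,1]$ is exactly $D_1\in[D_1^{\min},D_1^{\max}]$, and substituting $u^*$ into $\hat D_2$ using the above formula for $\sqrt{v}$ reproduces $D_2^u$ term by term.

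The principal obstacle is algebraic bookkeeping: checking that the discriminant in step (i) factors so that its roots coincide with the prescribed $D_1^\pm$, and that the substitution in step (ii) yields exactly the ornate $D_2^u$ formula rather than a superficially different but equivalent one. Both are mechanical once the change of variables $(u,v)$ and the single-parameter curve $\sqrt{u}^2+2\rho\sqrt{u}\sqrt{v}+\sqrt{v}^2=1$ are in place.
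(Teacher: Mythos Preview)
Your approach matches the paper's: both derive the proposition directly from Theorem~\ref{theorem:Bross} by reducing the SNR condition $P/N_1\le\Gamma(D_1,\sigma^2,\rho)$ to a quadratic inequality in $D_1$, whose discriminant $\sigma^4(1-\rho^2)\bigl[P^2-(P+2N_1)^2\rho^2\bigr]$ produces exactly the dichotomy on $P$ and the thresholds $D_1^\pm$. The paper's proof is a two-line pointer to this calculation (and to \cite{Bross:08}), while you spell it out and additionally carry out step~(ii), the explicit elimination of $(\alpha,\beta)$ via your $(u,v)$ parametrization to recover $D_2^u$; the paper simply takes that identification as understood.

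One slip: the ordering you announce, $D_1^-\le\sigma^2(1-\rho^2)\le D_1^+$, is incorrect. Both roots lie \emph{below} $\sigma^2(1-\rho^2)$: evaluating your quadratic $(P+N_1)D_1^2-(P+2N_1)\sigma^2(1-\rho^2)D_1+N_1\sigma^4(1-\rho^2)$ at $D_1=\sigma^2(1-\rho^2)$ gives $N_1\rho^2\sigma^4(1-\rho^2)\ge 0$, and the vertex sits at $\sigma^2(P+2N_1)(1-\rho^2)/[2(P+N_1)]<\sigma^2(1-\rho^2)$, so in fact $D_1^{\min}\le D_1^-\le D_1^+\le\sigma^2(1-\rho^2)\le D_1^{\max}$ (this is what the paper records just after the proof). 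Your conclusion is unharmed---on $[\sigma^2(1-\rho^2),D_1^{\max}]$ the SNR condition holds trivially since $\Gamma=+\infty$, and on $[D_1^{\min},\sigma^2(1-\rho^2))$ it holds precisely outside $(D_1^-,D_1^+)$, so the union is still $[D_1^{\min},D_1^-]\cup[D_1^+,D_1^{\max}]$---but the stated ordering should be fixed.
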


\begin{proof}
This proposition is a direct consequence of Theorem \ref{theorem:Bross}. It is straightforward to show that when $P\leq \frac{2\rho N_1}{1-\rho}$, the condition in (\ref{eqn:SNRbound}) holds for any choice of $D_1$ by analyzing the quadratic inequality (see \cite{Bross:08} Corollary 1); on the other hand, when $P> \frac{2\rho N_1}{1-\rho}$, (\ref{eqn:SNRbound}) only holds in the given range. 
\end{proof}

Given this alternative form, we only need to focus on the case that $P> \frac{2\rho N_1}{1-\rho}$, and moreover, in the range $D_1\in [D^-_1,D^+_1]$. It is worth noting that it is always true that $D^{\min}_1\leq D^-_1\leq D^+_1\leq (1-\rho^2)\sigma^2\leq D^{\max}_1$. 

Next we use Theorem \ref{theorem:genie} to write an lower bound for the function $D_2(P,\sigma^2,\rho,N_1,N_2,D_1)$.
\begin{prop}
\label{prop:lowerbound}
For any $D_1\in[D^{\min}_1,D^{\max}_1]$, $D_2(P,\sigma^2,\rho,N_1,N_2,D_1)\geq D^h_2(P,\sigma^2,\rho,N_1,N_2,D_1)$.
\end{prop}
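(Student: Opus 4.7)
The plan is to obtain the bound by feeding the genie-aided outer bound of Theorem \ref{theorem:genie} into the monotonicity observation that providing $S_2$ to the first decoder cannot hurt. Specifically, if $(D_1,D_2)$ is achievable for the original problem, then the same code yields a pair $(D_{1|2},D_2)$ in the genie-aided setting with $D_{1|2}\leq D_1$, so Theorem \ref{theorem:genie} supplies some $\alpha\in[0,1]$ for which
\begin{align*}
D_1\geq D_{1|2}\geq \frac{\sigma^2(1-\rho^2)}{1+\frac{\alpha P}{N_1}},\qquad D_2\geq \frac{\sigma^2(\alpha P+N_2)}{P+N_2}.
\end{align*}

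Next I would eliminate $\alpha$. The first inequality, rearranged, reads $\alpha P\geq \frac{N_1\sigma^2(1-\rho^2)}{D_1}-N_1$; since $D_2$ is monotonically increasing in $\alpha$ through the second inequality, substituting this lower bound for $\alpha P$ into the expression for $D_2$ gives
\begin{align*}
D_2\geq \frac{\sigma^2}{P+N_2}\left[\frac{N_1(1-\rho^2)\sigma^2}{D_1}+N_2-N_1\right]=D_2^h(P,\sigma^2,\rho,N_1,N_2,D_1),
\end{align*}
which is the claimed bound.

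The last detail to verify is that the minimizing $\alpha$ actually lies in the admissible range $[0,1]$ required by Theorem \ref{theorem:genie}. In the interesting regime $D_1\in[D_1^{\min},D_1^{\max}]$ one has $D_1\leq D_1^{\max}=\sigma^2\frac{(1-\rho^2)P+N_1}{P+N_1}$, which after rearrangement gives $\frac{N_1}{P}(\frac{\sigma^2(1-\rho^2)}{D_1}-1)\geq 0$, so $\alpha\geq 0$. Likewise $D_1\geq D_1^{\min}=\frac{N_1\sigma^2}{P+N_1}$ translates to $\alpha\leq 1$; if the unconstrained minimizer were to exit $[0,1]$ one would simply clip it, yielding an even weaker bound on $D_2$ that still dominates $D_2^h$.

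The main obstacle I anticipate is purely bookkeeping: confirming the algebraic manipulations line up exactly with the form of $D_2^h$ given in the theorem, and double-checking the endpoint behavior so that the resulting bound is valid for the entire range $[D_1^{\min},D_1^{\max}]$ rather than just the subinterval $[D_1^-,D_1^+]$ where it is eventually shown to be tight. No new information-theoretic inequality beyond Theorem \ref{theorem:genie} is needed here; the content of the proposition is entirely an optimization over the single parameter $\alpha$ once the genie bound is in place.
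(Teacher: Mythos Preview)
Your proposal is correct and follows exactly the route the paper takes: invoke the genie-aided outer bound of Theorem~\ref{theorem:genie} and eliminate the parameter $\alpha$ algebraically, which the paper simply describes as ``simple algebraic manipulation'' without giving details. One small slip in your bookkeeping paragraph: $D_1\le D_1^{\max}$ does \emph{not} imply $\frac{\sigma^2(1-\rho^2)}{D_1}\ge 1$ (indeed $D_1^{\max}\ge \sigma^2(1-\rho^2)$), so your claimed verification of $\alpha^*\ge 0$ fails in the range $D_1\in(\sigma^2(1-\rho^2),D_1^{\max}]$; however, as you yourself note, this paragraph is unnecessary since the $\alpha$ furnished by Theorem~\ref{theorem:genie} is already in $[0,1]$ and your chain of inequalities goes through regardless.
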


This proposition is a direct consequence of the fact that the achievable $(D_{1|2},D_2)$ region given in Theorem \ref{theorem:genie} is an outer bound for $\mathcal{D}(P,\sigma^2,\rho,N_1,N_2)$; the proof is obtained by simple algebraic manipulation to eliminate the parameter $\alpha$ in Theorem \ref{theorem:genie}, and we thus omitted the details.

The following proposition is the final piece for the proof of the main result. 
\begin{prop}
The proposed hybrid scheme achieves 
\begin{align}
D_2(P,\sigma^2,\rho,N_1,N_2,D_1)=D^h_2(P,\sigma^2,\rho,N_1,N_2,D_1),
\label{eqn:matchingbound}
\end{align}
when $D_1\in [D^-_1,D^+_1]$ by choosing 
\begin{align}
&\tilde{\alpha}=\sqrt{\frac{N_1}{D_1}-\frac{N_1}{\sigma^2(1-\rho^2)}}\label{eqn:alphachoice}\\
&\tilde{\beta}=\frac{N_1\rho}{\tilde{\alpha} (1-\rho^2)\sigma^2}.
\label{eqn:betachoice}
\end{align}
\end{prop}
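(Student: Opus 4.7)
The plan is to directly substitute the prescribed $(\tilde\alpha,\tilde\beta)$ from (\ref{eqn:alphachoice})--(\ref{eqn:betachoice}) into the hybrid achievability formulas (\ref{eqn:innerbound}) and verify two things: (i) the induced distortion pair is exactly $(D_1,\,D^h_2(P,\sigma^2,\rho,N_1,N_2,D_1))$, and (ii) the reduced power condition $\sigma^2(\tilde\alpha^2+\tilde\beta^2+2\rho\tilde\alpha\tilde\beta)\leq P$ holds precisely when $D_1\in[D^-_1,D^+_1]$. Once both are in hand, matching this upper bound against the lower bound from Proposition \ref{prop:lowerbound} closes the gap and yields the claim.

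The computation of $\tilde D^h_2$ is immediate: since $\tilde D^h_2$ in (\ref{eqn:innerbound}) depends only on $\tilde\alpha^2$, plugging in $\tilde\alpha^2 = N_1/D_1 - N_1/[\sigma^2(1-\rho^2)]$ collapses $\sigma^2[\tilde\alpha^2(1-\rho^2)\sigma^2+N_2]/(P+N_2)$ via one cancellation to $\tfrac{\sigma^2}{P+N_2}\bigl[\tfrac{N_1(1-\rho^2)\sigma^2}{D_1}+N_2-N_1\bigr]$, which is precisely $D^h_2(P,\sigma^2,\rho,N_1,N_2,D_1)$. Note that this step does not depend on (\ref{eqn:betachoice}).

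The identity $\tilde D^h_1 = D_1$ is the real algebraic work. I would substitute $\tilde\beta = N_1\rho/[\tilde\alpha(1-\rho^2)\sigma^2]$ into the numerator and denominator of the first expression in (\ref{eqn:innerbound}), clear the common factor $1/\tilde\alpha^2$, and then use $\tilde\alpha^2 = N_1/D_1 - N_1/[\sigma^2(1-\rho^2)]$ to convert the resulting rational function of $\tilde\alpha^2$ into a rational function of $D_1$ that reduces to $D_1$ itself. Conceptually, the peculiar form of (\ref{eqn:betachoice}) is dictated by forcing the MMSE estimate of $S_1$ from the post-decoding pair $(X_d,\,\tilde\alpha S_1+\tilde\beta S_2+Z_1)$ at the first receiver to attain variance exactly $D_1$; this design principle is what makes the algebra close cleanly.

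Finally, substituting the chosen $\tilde\alpha,\tilde\beta$ into $\tilde\alpha^2+\tilde\beta^2+2\rho\tilde\alpha\tilde\beta$ yields an expression in $D_1$ alone, and imposing $\leq P/\sigma^2$ and clearing denominators produces a quadratic inequality in $D_1$; I expect its two roots to match $D^-_1$ and $D^+_1$ from Theorem \ref{theorem:maintheorem} verbatim, with the inequality satisfied between them. The main obstacle is the bookkeeping in verifying $\tilde D^h_1 = D_1$: the numerator of (\ref{eqn:innerbound}) carries several cross terms in $\tilde\alpha,\tilde\beta$ that must be grouped carefully after substitution. Nothing genuinely new beyond mechanical algebra is needed, and the fact that the feasibility quadratic reproduces exactly the transition points $D^\pm_1$ inherited from Theorem \ref{theorem:Bross} serves as a strong internal consistency check, ensuring the hybrid scheme meets the uncoded scheme smoothly at the endpoints of $[D^-_1, D^+_1]$.
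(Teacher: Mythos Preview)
Your proposal is correct and follows essentially the same approach as the paper: verify the power constraint reduces to a quadratic condition equivalent to $D_1\in[D^-_1,D^+_1]$, then substitute the chosen $(\tilde\alpha,\tilde\beta)$ into (\ref{eqn:innerbound}) to recover $(D_1,D^h_2)$, matching the outer bound of Proposition~\ref{prop:lowerbound}. The paper organizes the $\tilde D^h_1$ computation slightly differently by first reducing to the intermediate form $\tilde D^h_1=\sigma^2(1-\rho^2)N_1/[(1-\rho^2)\tilde\alpha^2\sigma^2+N_1]$ before substituting $\tilde\alpha$, which you may find a convenient waypoint for the bookkeeping you flagged.
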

\begin{proof}
In order to prove this proposition, we need to show firstly that the given choice of $(\tilde{\alpha},\tilde{\beta})$ does not violate the power constraint, i.e., the condition (\ref{eqn:reducedpowercondition}) is satisfied; secondly, the given choice of $(\tilde{\alpha},\tilde{\beta})$ reduces the distortion pairs in (\ref{eqn:innerbound}) to those given in (\ref{eqn:matchingbound}). 

Notice that $\tilde{\alpha}$ and $\tilde{\beta}$ given in (\ref{eqn:alphachoice}) and (\ref{eqn:betachoice}) are well-defined and non-negative when $D_1\in [D^{-}_1,D^{+}_1]$, since $D_1\leq (1-\rho^2)\sigma^2$ in this range. For (\ref{eqn:reducedpowercondition}) to hold, we need to have
\begin{align}
\tilde{\alpha}^2+\tilde{\beta}^2+2\tilde{\alpha}\tilde{\beta}\rho=\tilde{\alpha}^2+\frac{\rho^2N^2_1}{\tilde{\alpha}^2(1-\rho^2)^2\sigma^4}+\frac{2\rho^2 N_1}{ (1-\rho^2)\sigma^2}\leq \frac{P}{\sigma^2}.
\end{align}
Solving the inequality gives that the necessary and sufficient condition that
\begin{align}
&\frac{(1-\rho^2)P-2\rho^2N_1-\sqrt{(P^2-(P+2N_1)^2\rho^2)(1-\rho^2)}}{2(1-\rho^2)\sigma^2}\nonumber\\
&\qquad\leq \tilde{\alpha}^2\leq \frac{(1-\rho^2)P-2\rho^2N_1+\sqrt{(P^2-(P+2N_1)^2\rho^2)(1-\rho^2)}}{2(1-\rho^2)\sigma^2}.
\label{eqn:middlealpha}
\end{align}
Substituting (\ref{eqn:alphachoice}) into (\ref{eqn:middlealpha}) and after certain algebra, we arrive at the conditions under which (\ref{eqn:middlealpha}) is true, which is exactly $D_1\in [D^-_1,D^+_1]$. 

In order to show that the lower bound as stated in Proposition \ref{prop:lowerbound} can be achieved, we first simplify the expression of $\tilde{D}^h_1$ given in (\ref{eqn:innerbound}) in terms of $\tilde{\alpha}$, which (after quite some algebra) eventually gives
\begin{align}
\tilde{D}^h_1=\sigma^2\frac{(1-\rho^2)N_1}{(1-\rho^2)\tilde{\alpha}^2\sigma^2+N_1}.
\label{eqn:almostfinalD1}
\end{align}
Substituting our choice of $\tilde{\alpha}$ given in (\ref{eqn:alphachoice}) into (\ref{eqn:almostfinalD1}) leads to $\tilde{D}^h_1=D_1$; again substituting (\ref{eqn:alphachoice}) into the expression of $\tilde{D}^h_2$ in (\ref{eqn:innerbound}) gives the expression stated in the proposition, which completes the proof.
\end{proof}

\vspace{0.5cm}
Readers may wonder how the magic value of $\tilde{\beta}$ was found, which optimizes $\tilde{D}^h_1$ in the hybrid scheme. Indeed, directly optimizing the distortion $\tilde{D}^h_1$ is extremely cumbersome\footnote{In fact we were not able to find the optimal solution for $\tilde{\beta}$ this way.}. To circumvent this difficulty, we instead solve for $\tilde{\beta}$ such that the inner bound matches the outer bound, which gives the given expression. This approach is less intuitive, since it is possible that neither the genie-aided outer bound nor the hybrid scheme inner bound is tight, however extensive numerical comparison indeed suggested that these bounds  match, which motivated us to take such an approach.

In Fig. \ref{fig:regime1} and Fig. \ref{fig:regime2} we give two typical achievable distortion regions, where for comparison we also include the performance of a simple separation-based scheme where the digital broadcast messages encoding $S_2$ and $S_1$ (conditioned on the reconstructed $S_2$), respectively. 
In both figures, each horizontal red line is the performance of the hybrid scheme by varying $\tilde{\beta}$ while keeping $\tilde{\alpha}$ fixed. Note that the hybrid scheme includes the uncoded scheme as a special case when the digital portion is allocated no power. Fig. \ref{fig:regime1} is plotted with the choice of source and channel satisfying the condition $P\leq \frac{2\rho N_1}{1-\rho}$, and thus the uncoded scheme is always optimal. For this case, adding digital code in the hybrid scheme is always inferior. In contrast, Fig \ref{fig:regime2} is plotted under the condition $P> \frac{2\rho N_1}{1-\rho}$, and thus uncoded scheme is only optimal at high and low $D_1$ regimes. In the regime that uncoded scheme is not optimal, it can be seen that even when analog portion does not include $S_2$, the hybrid scheme can sometimes outperform the uncoded scheme, however, by optimizing $S_2$ in the analog portion, the inner and outer bounds indeed match. Moreover, observe that the distortion of $\tilde{D}^h_1$ achieved by the hybrid scheme is not monotonic in $\tilde{\beta}$ when $\tilde{\alpha}$ is fixed (each red line), where the two extreme values of $\tilde{\beta}$ give the uncoded scheme and the hybrid scheme without analog $S_2$, respectively. 

\begin{figure}[tb]
\begin{centering}
\includegraphics[width=17cm]{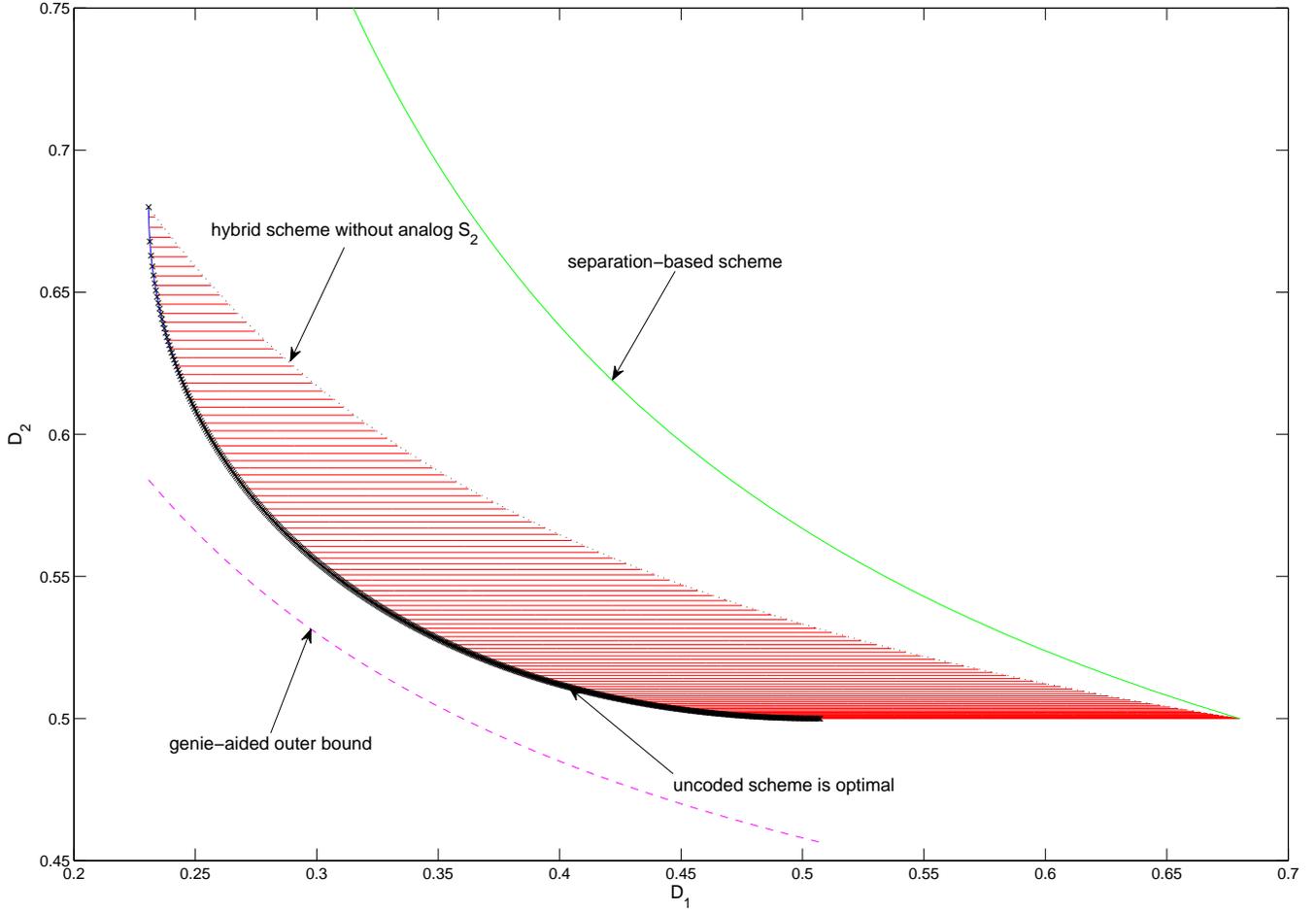}
\caption{A typical achievable distortion region when $P\leq \frac{2\rho N_1}{1-\rho}$. Here $P=1$, $N_1=0.3$, $N_2=1$ and $\rho=0.8$.\label{fig:regime1}}
\end{centering}
\end{figure}

\begin{figure}[tb]
\begin{centering}
\includegraphics[width=17cm]{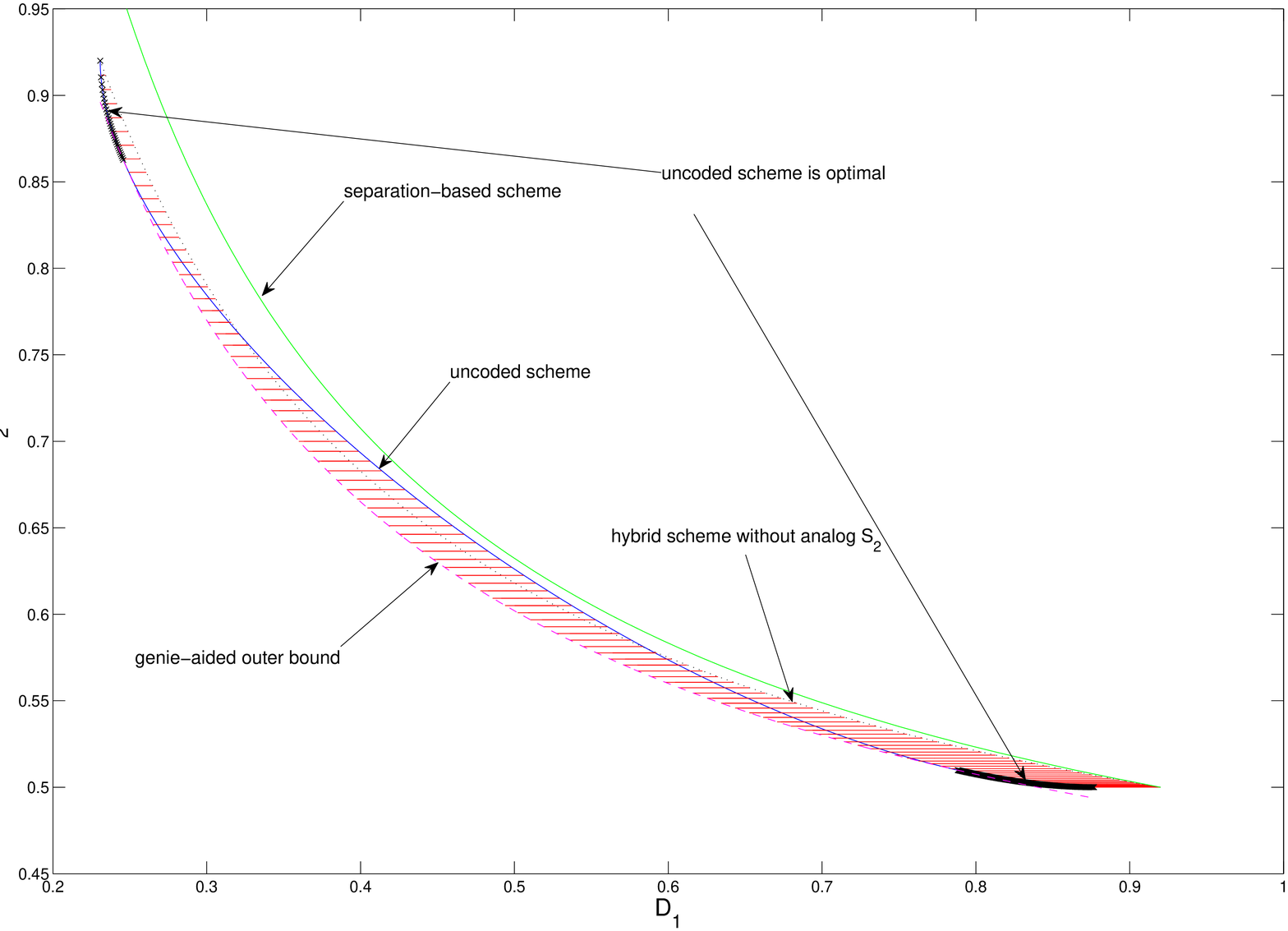}
\caption{A typical achievable distortion region when $P> \frac{2\rho N_1}{1-\rho}$. Here $P=1$, $N_1=0.3$, $N_2=1$ and $\rho=0.4$.\label{fig:regime2}}
\end{centering}
\end{figure}

\subsection{The Worst Case Property}

Next we prove Theorem \ref{theorem:worstcase}, i.e., the worst case property of the Gaussian setting. 

\begin{proof}[Proof of Theorem \ref{theorem:worstcase}]
We have shown an optimal scheme in the Gaussian setting is the proposed hybrid scheme, and thus we can limit ourselves to the distortion pairs achievable by this scheme. In fact we shall continue to use this scheme and the associated parameters when the sources and channel noises are not Gaussian. More precisely, we shall now use $X^*_d$ instead of $X_d$ to construct the digital source codewords
\begin{align}
X^*_d=\tilde{\gamma}(S^*_2+U),
\end{align}
where $U$ is still a Gaussian random variable with variance of $Q^*$, independent of everything else. The overall covariance structure of the scheme remains intact as in the Gaussian case, and thus the same (MSE) distortion pairs can be achieved, as long as the digital codewords can be correctly decoded at both the decoders, i.e.,
\begin{align}
I(S^*_2;X^*_d)\leq I(X^*_d;Y^*_i), \qquad i=1,2,
\end{align}
with our choices of the parameters, where $Y^*_i$ is the channel output in this non-Gaussian setting. Note that unlike in the Gaussian case, here the broadcast channel is not necessarily degraded, and thus we also need to make sure that the codeword can be correctly decoded at the first decoder.

To show the second decoder can succeed (with high probability), we only need to observe that
\begin{align}
I(S^*_2;X^*_d)-I(X^*_d;Y^*_2)&=h(X^*_d|Y^*_2)-h(X^*_d|S^*_2)\nonumber\\
&=h(X^*_d|Y^*_2)-h(\tilde{\gamma}U)\nonumber\\
&\leq h(X^*_d-\Expt(X^*_d|Y^*_2))-h(\tilde{\gamma}U)\nonumber\\
&\leq h(X_d-\Expt(X_d|Y_2))-h(\tilde{\gamma}U)\nonumber\\
&=h(X_d|Y_2)-h(\tilde{\gamma}U)\nonumber\\
&=I(S_2;X_d)-I(X_d;Y_2)=0,
\end{align}
where in the second inequality we substitute $X_d$ of the Gaussian version of the problem, because the terms have the same covariance structure, and Gaussian distribution maximizes the differential entropy; in the last but one equality, we add and subtract the same term $h(X_d)$, and the last equality is due to our specific choice of the parameters in the Gaussian problem. 

Similarly, we can write
\begin{align}
I(S^*_2;X^*_d)-I(X^*_d;Y^*_1)\leq I(S_2;X_d)-I(X_d;Y_1)\leq I(S_2;X_d)-I(X_d;Y_2)\leq 0,
\end{align}
where the second inequality is guaranteed by the relation in the Gaussian case, which is indeed a  degraded broadcast channel. This completes the proof.
\end{proof} 

\section{Concluding Remarks}
\label{sec:conclusion}

We provide a complete solution for the joint source-channel coding problem of sending bivariate Gaussian sources over Gaussian broadcast channels when the source bandwidth and channel bandwidth are matched. Thus this problem joins a limited list of joint source-channel coding problems for which complete solutions are known. Possible extension of this work  includes the case with more than two users or more than two sources, and approximate characterization for bandwidth mismatched case, which are part of our on-going work.

\bibliographystyle{IEEEbib}

\end{document}